\newtheorem{subsec:coding}{subsec:coding}
\newtheorem{propo}{Proposition}
\newcommand{\ls}[1]  
   {\dimen0=\fontdimen6\the=#1\dimen0
    \advance\lineskip.5\fontdimen5\the\lineskip-\dimen0
    \lineskiplimit=.9\lineskip
    \baselineskip=\lineskip
    \advance\baselineskip\dimen0
    \normallineskip\lineskip
    \normallineskiplimit\lineskiplimit
    \normalbaselineskip\baselineskip
    \ignorespaces
   }
\begin{document}

\title{A Sensing Error Aware MAC Protocol for Cognitive Radio Networks}


\author{Donglin~Hu~and~Shiwen~Mao
\thanks{D. Hu and S. Mao are with the Department of Electrical and Computer Engineering, Auburn University, Auburn, AL 36849-5201. Email: dzh0003@tigermail.auburn.edu, smao@ieee.org.}
\thanks{Part of this work was presented at IEEE GLOBECOM 2009~\cite{Hu09gc}.} 
}

\maketitle

\begin{abstract}
Cognitive radios (CR) are intelligent radio devices that can sense the radio environment and adapt to changes in the radio environment. Spectrum sensing and spectrum access are the two key CR functions. In this paper, we present a spectrum sensing error aware MAC protocol for a CR network collocated with multiple primary networks.  We explicitly consider both types of sensing errors in the CR MAC design, since such errors are inevitable for practical spectrum sensors and more important, such errors could have significant impact on the performance of the CR MAC protocol.  Two spectrum sensing polices are presented, with which secondary users collaboratively sense the licensed channels.  The sensing policies are then incorporated into $p$-Persistent CSMA to coordinate opportunistic spectrum access for CR network users.  We present an analysis of the interference and throughput performance of the proposed CR MAC, and find the analysis highly accurate in our simulation studies.  The proposed sensing error aware CR MAC protocol outperforms two existing approaches with considerable margins in our simulations, which justify the importance of considering spectrum sensing errors in CR MAC design.
\end{abstract}

\begin{keywords}
Cognitive Radio; Cross-layer Design and Optimization; Medium Access Control; Spectrum Sensing.
\end{keywords}

\thispagestyle{plain}\pagestyle{plain}


\section{Introduction}

A {\em cognitive radio} (CR) is a frequency-agile wireless communication device with a monitoring interface and intelligent decision-making that enables dynamic spectrum access~\cite{Zhao07c}. A CR can sense the radio environment and adapt to changes in the radio environment. 
The CR concept represents a significant paradigm change in spectrum regulation and utilization, i.e., from exclusive use of spectrum by licensed users (or, {\em primary users}) to dynamic spectrum access for unlicensed users (or, {\em secondary users}).  The high potential of CRs has attracted considerable efforts from the wireless community recently, 
for developing more efficient spectrum management policies and techniques~\cite{Zhao07c, Zhao09}.



Quality of service guarantee in wireless networks is a challenging problem that has attracted tremendous efforts~\cite{Tang07JSAC, Tang08TW, Zhang06CM, Tang07TW}.  Although the basic concept of CR is intuitive, it is challenging to design efficient cognitive network protocols to fully capitalize CR's potential.  In order to exploit transmission opportunities in licensed bands, the tension between primary user protection and secondary user spectrum access should be judiciously balanced.  Spectrum sensing and spectrum access are the two key CR functions.  Important design factors include (i) how to identify transmission opportunities, (ii) how secondary users determine, among the licensed channels, which channel(s) and when to access for data transmission, and (iii) how to avoid harmful interference to primary users under the omnipresent of spectrum (or, channel) sensing errors.  These are the problems that should be addressed in the medium access control (MAC) protocol design for CR networks.  
Although very good understandings on the availability process of licensed channels have been gained recently~\cite{Motamedi07, Geirhofer08JNL}, there is still a critical need to develop analytical models that take channel sensing errors into account for guiding the design of CR MAC protocols. 

In this paper, we present a channel sensing error aware MAC protocol for a CR network collocated with multiple primary networks.  We assume primary users access the licensed channels following a synchronous time slot structure~\cite{Zhao07c, Su08JNL}.  The channel states are independent to each other and each evolves over time following a discrete-time Markov process~\cite{Motamedi07, Zhao07c}.  Secondary users use their software-defined radio (SDR)-based transceivers to tune to any of the licensed channels, to sense and estimate channel status and to 
access the channels when they are found (or, believed) to be available. 
We explicitly consider channel sensing errors in the design of the CR MAC protocol. It has been shown in prior work that generally there are two types of channel sensing errors: (i) {\em false alarm}, when an idle channel is identified as busy, thus a spectrum opportunity will be wasted, (ii) {\em miss detection}, when a busy channel is identified as idle, thus leading to collision with primary users, since CR users will attempt to use such ``idle'' channels.  We consider both types of spectrum sensing errors in our CR MAC design, which have been shown to be unavoidable for practical spectrum sensors~\cite{Zhao07c}.

In particular, we develop two channel sensing polices, with which secondary users collaboratively sense the licensed channels and predict channel states. With the {\em memoryless sensing} policy, each secondary user chooses one of the $M$ licensed channels to sense 
with equal probability.  During the sensing phase, secondary users also exchange sensing results through a separate control channel.  This sensing policy is further improved with a mechanism to spread out secondary users to sense different channels, therefore reducing the chance that a channel is not sensed by any of the users. When spreading out secondary users to the channels, the mechanism also 
considers the autocorrelation of channel processes to obtain more accurate sensing results.  This is termed {\em improved sensing} policy.  

These two sensing polices are then incorporated into the $p$-Persistent Carrier Sense Multiple Access (CSMA) mechanism to make sensing error aware CR MAC protocols.  We analyze the proposed CR MAC protocols with respect to the interference and throughput performance and derive closed-form expressions.  Primary user protection is achieved via tunning the channel access probability $p$ of $p$-Persistent CSMA according to the interference analysis. The CR MACs also aims to maximize the CR network throughput while satisfying the primary user protection constraints. Through simulations, we find that the analysis is highly accurate as compared to simulation results. In addition, the proposed sensing error aware CR MAC protocols outperform two existing schemes with considerable gain margins, which justify the importance of considering channel sensing errors in CR MAC design. 



The remainder of this paper is organized as follows.  We describe the network model and assumptions in Section~\ref{sec:model}.  We then present the proposed CR MAC protocol and analyze its performance in Section~\ref{sec:crmac}. Our simulation studies are presented in Section~\ref{sec:sim}. Section~\ref{sec:related} discusses related work and Section~\ref{sec:con} concludes this paper.

\section{Network Model and Assumptions \label{sec:model}}

\begin{figure}
\centering
\includegraphics[width=3.4in]{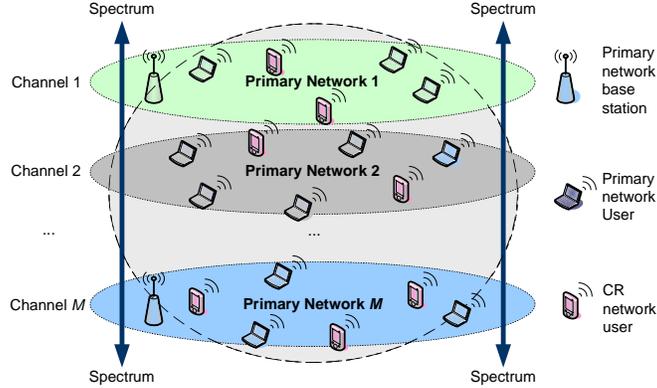}
\caption{The CR secondary network is collocated with $M$ primary networks.}
\label{fig:networkarc}
\end{figure}

The network model considered in this paper is illustrated in Fig.~\ref{fig:networkarc}.  
Consider $M$ primary networks, each allocated with a licensed channel.  
We assume the primary users access the channels following a synchronous slot structure as in prior work~\cite{Zhao07c, Su08JNL, Hu09}.  The channel states are independent to each other and each of the $M$ channels evolves over time following a discrete-time two-state Markov process, as shown in Fig.~\ref{fig:ChanModel}. Such channel model has been validated by recent measurement studies~\cite{Zhao07c, Motamedi07, Su08JNL}.
We define the {\em network state vector} in slot $t$ as $\vec{S}(t)=[S_1(t),S_2(t),\dots,S_M(t)]$, where $S_m(t)$ denotes the state of channel $m$, for $m=1, 2, \cdots, M$. When channel $m$ is idle, we have $S_m(t)=0$; when channel $m$ is busy, we have $S_m(t)=1$. 

\begin{figure}
\centering
\includegraphics[width=2.8in]{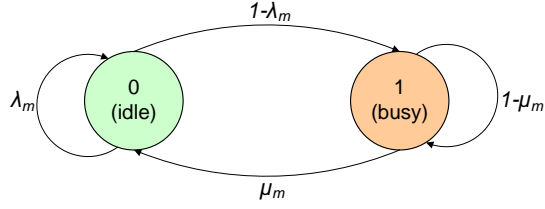}
\caption{The discrete-time two-state Markov model for the state of channel $m$, $S_m$, for $m = 1,2,\dots,M$.}
\label{fig:ChanModel}
\end{figure}

Let $\lambda_m$ and $\mu_m$ be the transition probability of remaining in state $0$ and the transition probability from state $1$ to $0$ for channel $m$, respectively. Let $\eta_m = \Pr(S_m=1)$ denote the {\em utilization} of channel $m$ with respect to primary user transmissions.
Let $\zeta_m = \Pr(S_m=0)$ be the probability that channel $m$ is idle (i.e., not being used by primary users). We then have
\begin{eqnarray}
  \eta_m & = & \lim_{T\to \infty} \frac{1}{T} \sum_{t=1}^T S_m(t) 
  = \frac{1-\lambda_m}{1-\lambda_m+\mu_m} \label{eq:ProbBusy} \\
  \zeta_m & = & 1 - \Pr(S_m=1) = \frac{\mu_m}{1 - \lambda_m + \mu_m}. \label{eq:ProbIdle}
\end{eqnarray}

We assume a secondary network collocated with the $M$ primary networks, within which $N$ secondary users take advantage of the spectrum white spaces in $M$ licensed channels for data transmissions.  For protection of primary users, 
the probability of collision caused by secondary user transmissions to primary users should be upper bounded by a prescribed threshold $\gamma_m$, for $m=1,2,\cdots,M$.  

As in prior work~\cite{Su08JNL, Nan07, Motamedi07}, we assume that each secondary user is equipped with two transceivers: a {\em control transceiver} that operates over a dedicated control channel, which we assume is always available (e.g., a channel in the industrial, scientific and medical (ISM) band), and a {\em data transceiver} that is used for data communications through the $M$ licensed channels.  The data transceiver consists of an SDR that can be tuned to any of the $M$ licensed channels to transmit and receive data.  Secondary users also use their transceivers for spectrum sensing and exchanging sensing results. 

\section{Sensing Error Aware CR MAC Protocol \label{sec:crmac}}

For the CR network described in Section~\ref{sec:model}, we develop sensing aware MAC protocols for opportunistic spectrum access. 
The time slot structure of the proposed MAC protocols is shown in Fig.~\ref{fig:SlotStructure}, which consists of a {\em sensing phase} and a {\em transmission phase}.  The sensing phase is further divided into $\bar{K}$ mini-slots, within which each secondary user senses one of the licensed channels.  CR users access the channels for data transmission during the transmission phase.  Let $T_s$, $T_{ms}$, and $T_{data}$ denote the duration of a time slot, a mini-slot, and the transmission phase, respectively (see Fig.~\ref{fig:SlotStructure}), we have
\begin{eqnarray} \label{eq:ts}
   T_s = \bar{K} \times T_{ms} + T_{data}. 
\end{eqnarray}

We first discuss the two key components of the proposed protocols, i.e., channel sensing and channel access, and then analyze their performance with respect to primary user protection and the expected throughput.  Table~\ref{tb:Notation} summarizes the notation used in this paper.

\begin{figure}
\centering
\includegraphics[width=3.3in]{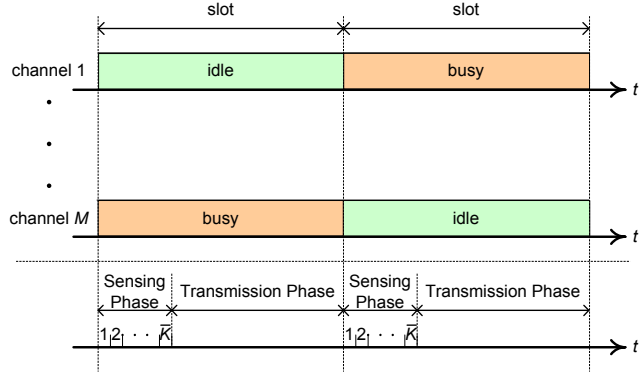}
\caption{The time slot structure of the proposed sensing error aware CR MAC protocol.}
\label{fig:SlotStructure}
\end{figure}

\begin{table}
\begin{center}
\caption{Notation}
\begin{tabular}{l l}
\hline
{\em Symbol} & {\em Definition} \\
\hline
$M$ & number of data channels \\
$N$ & number of secondary users \\
$\lambda_m$ & transition probability of channel $m$ from idle to idle  \\
$\mu_m$ & transition probability of channel $m$ from busy to idle  \\
$\eta_m$ & probability that channel $m$ is busy \\
$\zeta_m$ & probability that channel $m$ is idle \\
$\gamma_m$ & maximum allowable collision probability of channel $m$\\
$T_s$ & duration of a time slot \\
$T_{ms}$ & duration of a mini-slot \\
$T_{data}$ & duration of the transmission phase \\
$a_{m,k}$ & probability  that channel $m$ is idle conditioned on $k$ sensing results  \\
$\Theta_0, \Theta_1$ & thresholds for channel decision $\Theta_0 < \Theta_1$ \\
$\Psi^0_{m,k}$ & set of $0$ observations that make $a_{m,k}$ below
$\Theta_0$ \\
$\Psi^1_{m,k}$ & set of $0$ observations that make $a_{m,k}$ above
$\Theta_1$ \\
$\Psi^2_{m,k}$ & set of $0$ observations that make $a_{m,k}$ between  
         $\Theta_0$ and $\Theta_1$ \\
$S_m(t)$ & state of channel $m$ at time $t$\\
$W_{m,i}$ & the $i$th sensing result on channel $m$ \\
$\theta_{m,i}$ & the $i$th observed sensing result on channel $m$ (0 or 1)  \\
$\epsilon_m$ & probability of false alarm on channel $m$ \\
$\delta_m$ & probability of miss detection on channel $m$ \\
$K_m$ & stopping time in the sensing phase for channel $m$ \\
$p$ & transmission probability of a secondary user \\
$P_m^{idle}$ & probability that no secondary user transmits on channel $m$ \\
$P_m^{succ}$ & probability that one secondary user wins 
   channel $m$ \\
$P_m^{coll}$ & probability of collision on 
   channel $m$ \\
$P^{idle}$ & probability that no secondary user transmits on control channel \\
$P^{succ}$ & probability that one secondary user wins all the idle channels \\
$P^{coll}$ & probability of collision on control channel \\   
$P_m^{intf}$ & probability of interference from secondary users 
on channel $m$ \\
$\Lambda^1_m(u)$ & throughput of channel $m$ sensed by $u$ users in Case 1 \\
$\Lambda^2_m(u)$ & throughput of channel $m$ sensed by $u$ users in Case 2 \\
$\Omega_1$ & CR network throughput in Case 1 \\
$\Omega_2$ & CR network throughput in Case 2 \\
$R_m$ & data rate of licensed channel $m$ \\
\hline
\end{tabular}
\label{tb:Notation}
\end{center}
\end{table}

\subsection{Sensing Phase \label{subsec:sensingph}}

The first key element of the proposed MAC protocols is spectrum, or channel sensing. Although precise and timely channel state information is highly desirable for opportunistic spectrum access and primary user protection, contiguous full-spectrum sensing is both energy inefficient and hardware demanding.  Since we assume a secondary user is equipped with one transceiver for spectrum sensing, i.e., the data transceiver with SDR capability, only one of the licensed channels can be sensed by the secondary user at a time.  

During the sensing phase (see Fig.~\ref{fig:SlotStructure}), a secondary user picks a licensed channel and keeps on sensing it for one or multiple mini-slots.  As discussed, two kinds of detection errors may occur: false alarm and miss detection. 
We assume all secondary users have the same probability of detection errors when sensing channel $m$, $m=1, 2, \cdots, M$. Let $\epsilon_m$ and $\delta_m$ denote the probabilities of false alarm and miss detection on channel $m$, respectively.  The spectrum sensing performance can be represented by the Receiver Operation Characteristic (ROC) curve, where $(1-\delta_m)$ is plotted as a function of $\epsilon_m$~\cite{Zhao07c}.  
For a specific channel $m$ in a certain time slot $t$, the sensing error probabilities can be written as:
\begin{eqnarray}\label{eq:ErrorProb}
	&& \Pr(W_{m,i}=1 \;|\; S_m=0)=\epsilon_m, \mbox{for all } \; i=1,2,\cdots \\
	&& \Pr(W_{m,i}=0 \;|\; S_m=1)=\delta_m, \mbox{for all } \; i=1,2,\cdots,
\end{eqnarray}
where $W_{m,i}$ is the $i$th sensing result of channel $m$ and $S_m$ is state of channel $m$. 

We assume that the sensing results from different users are independent and the sensing results in different mini-slots are also independent to each other. Suppose a secondary user continues to sense channel $m$ for $k$ mini-slots and obtains $k$ sensing results. The conditional probability that channel $m$ is available after the $k$th sensing mini-slot, denoted by $a_{m,k}$, can be derived as
\begin{eqnarray}\label{eq:AvailProb}
a_{m,k} & = & \Pr(S_m=0 \;|\; W_{m,1}=\theta_{m,1},\cdots,W_{m,k}=\theta_{m,k})\nonumber \\
& = & \frac{\Pr(W_{m,i}=\theta_{m,i},i=1,\cdots,k|S_m=0)\Pr(S_m=0)}{\sum_{j=0}^1 \Pr(W_{m,i}=\theta_{m,i},i=1,\cdots,k|S_m=j)\Pr(S_m=j)} \nonumber\\
& = & \frac {\Pr(S_m=0)\prod_{i=1}^k\Pr(W_{m,i}=\theta_{m,i}|S_m=0)}{\sum_{j=0}^1 \Pr(S_m=j)\prod_{i=1}^k\Pr(W_{m,i}=\theta_{m,i}|S_m=j)} \nonumber \\
& = & \left[ 1 + \frac{\Pr(S_m=1)}{\Pr(S_m=0)} \prod_{i=1}^k \frac{\Pr(W_{m,i}=\theta_{m,i}|S_m=1)}{\Pr(W_{m,i}=\theta_{m,i}|S_m=0)} \right]^{-1} \nonumber \\
& = & \left[ 1 + \alpha_m^{d_m} \beta_m^{k-d_m} \frac{\Pr(S_m = 1)}{\Pr(S_m = 0)} \right]^{-1} \nonumber \\
& = & \left( 1 + \alpha_m^{d_m} \beta_m^{k-d_m} \frac{\eta_m}{\zeta_m} \right)^{-1}_{,}
\end{eqnarray}
where $d_m$ is the number of observations whose sensing result is $0$ on channel $m$, 
and $\alpha_m$ and $\beta_m$ are defined as follows.
\begin{eqnarray}
\alpha_m = \frac{\Pr(W_{m,i}=0 | S_m=1)}{\Pr(W_{m,i}=0 |
      S_m=0)}=\frac{\delta_m}{1-\epsilon_m}, \; \mbox{for } \theta_{m,i}=0 \\
\beta_m = \frac{\Pr(W_{m,i}=1 | S_m=1)}{\Pr(W_{m,i}=1 |
      S_m=0)}=\frac{1-\delta_m}{\epsilon_m}, \; \mbox{for } \theta_{m,i}=1. 
      \hspace{-0.05in}
\end{eqnarray}

For the secondary user, it is also possible that it obtains some of the $k$ sensing results by local measurements, and receives the remaining sensing results from the control channel in the case that some other secondary users are sensing the same channel $m$. By abuse of notation, we also use $a_{m,k}$ to denote the conditional channel availability probability in this case, due to independence of the sensing results. 
We plot $a_{m,k}$ as a function of $k$ for the channel idle and busy cases in Fig.~\ref{fig:amk}, using the same parameters as one of the simulations (see Section~\ref{sec:sim}). We have the following proposition for $a_{m,k}$. 

\begin{propo}
When channel $m$ is idle, $a_{m,k}$ is a monotone increasing function of $k$; when channel $m$ is busy, $a_{m,k}$ is a monotone decreasing function of $k$.
\end{propo}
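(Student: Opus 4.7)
The plan is to exploit the closed-form expression~(\ref{eq:AvailProb}) for $a_{m,k}$ and reduce the claim to the sign of a single scalar. Writing $a_{m,k} = (1+g_k)^{-1}$ with $g_k = \alpha_m^{d_m}\beta_m^{k-d_m}(\eta_m/\zeta_m)$, the map $x \mapsto 1/(1+x)$ is strictly decreasing on $(0,\infty)$, so the proposition is equivalent to $g_k$ being strictly decreasing in $k$ when $S_m=0$ and strictly increasing in $k$ when $S_m=1$. Passing to $\log g_k$ linearizes things: the increment from step $k$ to step $k+1$ equals $\log\alpha_m$ if the new reading is $0$ and $\log\beta_m$ if it is $1$, so $\log g_k$ is a random walk whose mean drift is dictated by the true channel state. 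Because $d_m$ is random, the proposition must be read in the expected/typical-sample-path sense that Figure~\ref{fig:amk} actually depicts; I plan to prove it by substituting $d_m$ with its conditional mean, after which the pathwise version follows for large $k$ from the strong law of large numbers.

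Substituting $E[d_m\mid S_m=0] = (1-\epsilon_m)k$ in the idle case and $E[d_m\mid S_m=1] = \delta_m k$ in the busy case renders $\log g_k$ affine in $k$, with slopes $\log r_0$ and $\log r_1$ where
\[
r_0 = \alpha_m^{1-\epsilon_m}\beta_m^{\epsilon_m}, \qquad r_1 = \alpha_m^{\delta_m}\beta_m^{1-\delta_m}.
\]
After inserting $\alpha_m = \delta_m/(1-\epsilon_m)$ and $\beta_m = (1-\delta_m)/\epsilon_m$, these slopes admit a clean information-theoretic interpretation as Kullback--Leibler divergences between the two observation laws $P_0 = (1-\epsilon_m,\epsilon_m)$ and $P_1 = (\delta_m,1-\delta_m)$:
\[
\log r_0 = -D(P_0\,\|\,P_1), \qquad \log r_1 = D(P_1\,\|\,P_0).
\]
By Gibbs' inequality both divergences are non-negative, and they vanish exactly when $P_0 = P_1$, i.e.\ when $\epsilon_m + \delta_m = 1$ (a sensor carrying no information). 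For any useful sensor this yields $r_0 < 1$ and $r_1 > 1$, so $g_k$ shrinks geometrically in the idle case and grows geometrically in the busy case, giving the stated monotonicity of $a_{m,k}$.

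The main obstacle I foresee is interpretive rather than computational: $g_k$ is not pointwise monotone along every realization, since an isolated anomalous reading reverses one step of the random walk. Once monotonicity is read as a statement about the mean trajectory (or the almost-sure large-$k$ trend via SLLN), the KL-divergence calculation closes the proof in a few lines. A purely elementary alternative for the idle case uses the weighted AM--GM inequality, $\alpha_m^{1-\epsilon_m}\beta_m^{\epsilon_m}\le(1-\epsilon_m)\alpha_m+\epsilon_m\beta_m=1$ with strict inequality unless $\alpha_m=\beta_m$; the busy case does not submit to AM--GM in the right direction, so the divergence argument appears to be the cleanest unified route.
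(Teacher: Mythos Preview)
Your argument is correct for the interpretation you state explicitly---monotonicity of the mean trajectory of $\log g_k$, i.e.\ the drift sign of the log-likelihood-ratio random walk---and the identification $\log r_0=-D(P_0\|P_1)$, $\log r_1=D(P_1\|P_0)$ is both clean and sharper in its hypotheses (you need only $\epsilon_m+\delta_m\neq 1$). The paper, however, reads the proposition differently: it shows that the threshold-crossing probability $\Pr(a_{m,k}\ge\Theta_1)$ is increasing in $k$ when $S_m=0$ (and analogously for $\Theta_0$ when $S_m=1$), which is the quantity feeding directly into the stop-time distributions~(\ref{eq:CondStop1})--(\ref{eq:CondStopk}). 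To get there the paper performs the same log-linearization you do, rewriting $\{a_{m,k}\ge\Theta_1\}$ as a linear constraint on the partial sum $\sum_{i=1}^k X_i$, and then invokes the extra hypothesis $\epsilon_m,\delta_m<\tfrac12$ to argue that each new summand is more likely to move the sum in the favorable direction.

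What each approach buys: your route gives the exact geometric rates $r_0,r_1$ and an information-theoretic reading via Gibbs' inequality, under the weakest possible non-degeneracy condition; the paper's route targets the stochastic-ordering statement that the subsequent protocol analysis actually consumes, at the cost of the stronger assumption $\epsilon_m,\delta_m<\tfrac12$ and a final ``it follows that'' step that is left heuristic. Your SLLN remark recovers the paper's conclusion asymptotically but not for every finite $k$; if you want to match the paper's version exactly you would need an additional argument linking negative drift (or, more precisely, negative median of the increment under $S_m=0$, which is where $\epsilon_m<\tfrac12$ enters) to monotonicity of the tail probability at each step.
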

\begin{proof}
From the defintion of $a_{m,k}$ in (\ref{eq:AvailProb}), it follows that
\begin{eqnarray}
&&\Pr(a_{m,k}\ge\theta_1) \nonumber \\
&=&\Pr \left( \left( 1 + \frac{\eta_m}{\zeta_m} \left(\frac{\delta_m}{1-\epsilon_m} \right)^{\sum_{i=1}^k\bar{W}_{m,i}} \left(\frac{1-\delta_m}{\epsilon_m} \right)^{\sum_{i=1}^k W_{m,i}} \right)^{-1}\ge\theta_1 \right) \nonumber \\
&=&\Pr \left( \left(\frac{\delta_m}{1-\epsilon_m} \right)^{\sum_{i=1}^k\bar{W}_{m,i}} \left(\frac{1-\delta_m}{\epsilon_m} \right)^{\sum_{i=1}^k W_{m,i}}\le \left(\frac{1}{\theta_1}-1 \right) \frac{\zeta_m}{\eta_m} \right) \nonumber \\
&=&\Pr \left( \sum_{i=1}^k \left( W_{m,i}\log \left( \frac{1-\delta_m}{\epsilon_m} \right) - \bar{W}_{m,i} \log \left( \frac{1-\epsilon_m}{\delta_m} \right) \right) \le \chi_m \right)
\end{eqnarray}
where $\bar{W}_{m,i}=1-W_{m,i}$ and $\chi_m=\log((\frac{1}{\theta_1}-1)\frac{\zeta_m}{\eta_m})$. 

Since $\epsilon_m <0.5$ and $\delta_m<0.5$ for practical sensors, both $\log \left( \frac{1-\delta_m}{\epsilon_m} \right)$ and $\log \left( \frac{1-\epsilon_m}{\delta_m} \right)$ are positive. If $S_m(t)=0$, we have that $\Pr(W_{i,k+1}=1)<\Pr(W_{i,k+1}=0)=\Pr(\bar{W}_{i,k+1}=1)$. 
It follows that $\Pr(a_{m,k}\ge\theta_1) < \Pr(a_{m,k+1}\ge\theta_1)$. That is, $a_{m,k}$ is a monotone increasing function of $k$. 

Similarly, we can show that $\Pr(a_{m,k}\le\theta_0) < \Pr(a_{m,k+1}\le\theta_0)$ when $S_m(t)=1$. That is, $a_{m,k}$ is a monotone decreasing function of $k$ when the channel is busy. 
\end{proof}

\begin{figure}
\centering
\includegraphics[width=3.3in]{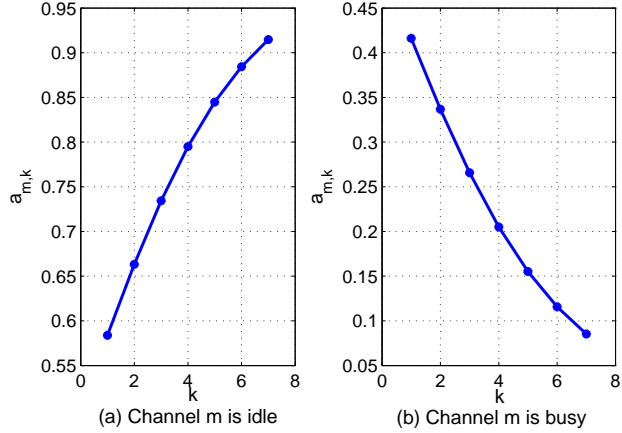}
\caption{Illustration of $a_{m,k}$ as a monotone function of $k$, when $\epsilon_m=0.3$, $\delta_m=0.3$, and $\bar{K}=7$.} 
\label{fig:amk}
\end{figure}

During the sensing phase, each secondary user chooses one channel to sense with equal probability at the beginning of the time slot. Secondary users also report their sensing results over the control channel, and share the corresponding channel sensing results during the mini-slots. Two threshold probabilities $\Theta_0 < \Theta_1$ are used for decision making.  
\begin{itemize}
	\item If the availability of channel $m$, i.e., $a_{m,k}$, is below $\Theta_0$, the channel is believed to be busy and the secondary users will wait till the next time slot to start sensing again.  
	\item If the availability of channel $m$ is between $\Theta_0$ and $\Theta_1$, 
secondary users will keep on sensing the same channel to obtain more sensing results for more accurate estimation of the channel state, until the maximum number of mini-slots, $\bar{K}$, is reached.  
	\item If the availability of channel $m$ exceeds $\Theta_1$, the channel is believed to be idle and the secondary users stop sensing and prepare to access the channel (see Section~\ref{subsec:transph}).
\end{itemize}
The {\em stop time} $K_m$ when secondary users stop sensing channel $m$, is a random variable that takes value between $1$ and $\bar{K}$, the maximum number of mini-slots that can be used for sensing (see Fig.~\ref{fig:SlotStructure}).  If we have $\Theta_0 < a_{m,k} < \Theta_1$ by the end of the sensing phase, then channel $m$ state is not identified due to lack of time (or sensing results) and the channel will not be accessed. 

When there are $k$ sensing results available (e.g., one user senses channel $m$ for $k$ mini-slots, or it senses channel $m$ for less than $k$ mini-slots and receives some channel $m$ sensing results from other secondary users), we define three sets of estimates for the state of channel $m$, as: 
\begin{eqnarray}
\Psi^0_{m,k}&=&\{d_m\;|\;a_{m,k}\le \Theta_0, \forall \; 0 \le d_m \le k\} \label{eq:Psi0} \\
\Psi^1_{m,k}&=&\{d_m\;|\;a_{m,k}\ge \Theta_1, \forall \; 0 \le d_m \le k\} \label{eq:Psi1} \\
\Psi^2_{m,k}&=& \{d_m\;|\;\Theta_0 < a_{m,k} < \Theta_1, \forall \; 0 \le d_m \le k\}
            = \overline{(\Psi^0_{m,k} \cup \Psi^1_{m,k})}, \label{eq:Psi2}
\end{eqnarray}
where $d_m$ is the number of observations 
whose sensing result is $0$ on channel $m$. We then present two channel sensing policies based on this classification in the following.

\subsubsection{Memoryless Sensing Policy \label{subsubsec:memoryless}}

We first propose a memoryless sensing policy in this section, with which secondary users cooperatively sense the licensed channels.  We call the policy ``memoryless'' since it does not consider the channel sensing and access results in the previous time slot for simplicity. 
With this memoryless policy, each secondary user chooses one of the $M$ licensed channels to sense with equal probability, i.e., $1/M$. Furthermore, channel selections of the $N$ secondary users are independent and identically distributed (i.i.d.).  

Let $U_m$ be the random variable representing the number of secondary users that select channel $m$ to sense.  The probability that $u_m$ secondary users choose channel $m$ to sense is
\begin{equation}\label{eq:NumUser}
  \Pr(U_m=u_m)=\left(\!\!\!
  \begin{array}{c}
    N \\
    u_m
  \end{array}
  \!\!\!\right) \left( \frac{1}{M} \right)^{u_m} \left( \frac{M-1}{M} \right)^{N-u_m}_{.}
\end{equation}
The joint distribution that there are $u_1$ secondary users sensing channel $1$, $u_2$ secondary users sensing channel $2$, $\cdots$, and $u_M$ secondary users sensing channel $M$, is
\begin{eqnarray}\label{eq:JtNumUser}
	\Pr(u_1, u_2, \cdots, u_M) = \left\{
			\begin{array}{ll}
				\prod_{m=1}^M \Pr(U_m=u_m), & \mbox{if } \; \sum_{m=1}^M u_m=N \\
				0, & \mbox{otherwise}.
			\end{array} \right.
\end{eqnarray}

We next derive the conditional probability that secondary users compete for the channel after the sensing phase stops at the end of mini-slot $K_m < \bar{K}$.  The stop time $K_m < \bar{K}$ has two implications. First, it means that secondary users stop sensing channel $m$ after mini-slot $K_m$.  Second, it indicates that the estimated availability of channel $m$, $a_{m,k}$, has already exceeded the threshold $\Theta_1$. Thus these secondary users think channel $m$ is idle and are ready to access the channel for data transmission.  Note that a secondary user also stops sensing a channel $m$ when $a_{m,k} < \Theta_0$ (when it is sure that the channel is busy). We are not interested in this case, since the secondary user will back off until the next time slot.  Thus $K_m$ is defined with regard to the event $a_{m,k} > \Theta_1$. 

There are $U_m$ users sensing channel $m$ and $U_m K_m$ observations are available after mini-slot $K_m$, which is also a random variable. 
We first derive the conditional probability for event $K_m=1$, as
\begin{eqnarray}\label{eq:CondStop1}
    && \Pr(K_m=1 \; | \; U_m=u \;, \;S_m=0) = 
      \Pr(a_{m,u} \ge  \Theta_1)   \nonumber \\
     & = & \sum_{d_m^1 \in \Psi^1_{m,u}}
			\left(\!\!\!
			\begin{array}{c}
			u \\
			d_m^1
			\end{array}
			\!\!\!\right)\left[
(\epsilon_m)^{u-d_m^1}(1-\epsilon_m)^{d_m^1} \right],
\end{eqnarray}
where $d_m^1$ is the number of observations whose sensing result is $0$
in the first mini-slot. 

Following similar reasoning as in (\ref{eq:CondStop1}), we can obtain the conditional probability for the event that the stop time $K_m=2$ as 
\begin{eqnarray} 
    && \Pr(K_m=2 \; | \; U_m=u \;, \;S_m=0) = \Pr \left[(\Theta_0 < a_{m,u} < \Theta_1) \cap 
        (a_{m,2u} \ge \Theta_1) \right] \nonumber \\
    & = & \hspace*{-0.1in} \sum_{D_m^2 \in
\Psi^1_{m,2u}}\sum_{d_m^1 \in
       \Psi^2_{m,u}}
\hspace*{-0.1in} \left(\!\!\!
\begin{array}{c}
u \\
d_m^1
\end{array}
\!\!\!\right)
\left(\!\!\!
\begin{array}{c}
u \\
d_m^2
\end{array}
\!\!\!\right)
\left[(\epsilon_m)^{2u-D_m^2} (1-\epsilon_m)^{D_m^2}\right], 
\end{eqnarray}
where $\Psi^{2}_{m,k}$ is defined in (\ref{eq:Psi2})
and $D_m^2=d_m^1 + d_m^2$.
%
In the general case, we can derive the conditional probability for the event that the stop time is $K_m=k$ as:
\begin{eqnarray}\label{eq:CondStopk}
& & \hspace*{-0.1in} \Pr(K_m=k \; | \; U_m=u\; , \; S_m=0) \nonumber \\
& & \hspace*{-0.25in} = \Pr \left[(\Theta_0 < a_{m,u} < \Theta_1)\cap
(\Theta_0 < a_{m,2u} < \Theta_1) \cap  
  \cdots \cap (\Theta_0 < a_{m,(k-1)u} < \Theta_1)
\cap (a_{m,ku}\ge \Theta_1) \right] \nonumber \\
& & \hspace*{-0.25in} = \sum_{D_m^k \in \Psi^1_{m,ku}}
\sum_{D_m^{k-1} \in \Psi^2_{m,(k-1)u}}\cdots
\sum_{d_m^1 \in \Psi^2_{m,u}} 
\left(\!\!\!
\begin{array}{c}
u \\
d_m^1
\end{array}
\!\!\!\right)
\left(\!\!\!
\begin{array}{c}
u \\
d_m^2
\end{array}
\!\!\!\right)
\cdots
\left(\!\!\!
\begin{array}{c}
u \\
d_m^k
\end{array}
\!\!\!\right) \left[
(\epsilon_m)^{ku-D_m^k}(1-\epsilon_m)^{D_m^k} \right], 
\end{eqnarray}
where $k = 1, \cdots, \bar{K}$ and $D_m^k=\sum_{i=1}^k d_m^i$. We will apply these results in Section~\ref{subsec:throughput} to derive the throughput of the CR network by the {\em law of total probability}.

\subsubsection{Improved Sensing Policy \label{subsubsec:improved}}

Under the memoryless sensing policy, some channels may not be sensed by any of the secondary users. Such an event occurs with probability $\Pr (U_m=0) = \left( \frac{M-1}{M} \right)^N $, which is sufficiently large when $M$ is large and/or the number of secondary users is close to the number of channels. Secondary users will not be able to estimate the state of a channel that nobody senses, and will neither access it in the transmission phase. Therefore, the spectrum opportunities in that channel will be wasted when such events occur. 

Motivated by this observation, we 
develop an {\em improved sensing} policy that attempts to reduce the chance that a channel is not sensed by any of the secondary users. The improved sensing policy incorporates a mechanism to spread secondary users to the channels. It also exploits channel state autocorrelation by considering sensing results and channel states in the previous time slot. 

By the end of the sensing phase in a time slot $t$, the secondary users compute the channel availability $a_{m,k}$ for each channel $m$. During the following transmission phase, if a secondary user transmits on channel $m$, it can obtain more accurate channel state information: if its transmission is successful, then channel $m$ is idle in time slot $t$; otherwise, channel $m$ is busy in the time slot. Such channel information can be exchanged at the beginning of the sensing phase in the next time slot. Then, we can classify the $M$ channels into three sets according to the channel states in time slot $t$, including 
\begin{itemize}
	\item The set of channels that are detected or believed to be idle, denoted by $\mathcal{B}_0(t)$.  
	\item The set of channels that are detected or believed to be busy, denoted by $\mathcal{B}_1(t)$. 
	\item The set of channels whose states are not identified due to lack of time or not sensed by any of the secondary users, denoted by $\mathcal{B}_2(t)$. 
\end{itemize}
Let $|\mathcal{B}_0(t)|$, $|\mathcal{B}_1(t)|$ and $|\mathcal{B}_2(t)|$ be the cardinalities of $\mathcal{B}_0(t)$, $\mathcal{B}_1(t)$, and $\mathcal{B}_2(t)$, respectively. 

If channel $m$ is in set $\mathcal{B}_0(t)$ and the stop time on channel $m$ is less than the maximum stop time $\bar{K}$, one user among those $u_m$ users that are sensing this channel will be randomly chosen to switch to sense another channel in the set $m \cup \mathcal{B}_1(t) \cup \mathcal{B}_2(t)$ in time slot $(t+1)$. 
If channel $m$ is in set $\mathcal{B}_1(t)$ and the stop time on channel $m$ is less than the maximum stop time $\bar{K}$, the secondary users that are sensing this channel will randomly choose a channel in $m \cup \mathcal{B}_2(t)$ to sense in time slot $(t+1)$.  With the above mechanism that reassigns secondary users to channels based on the sensing results in the previous time slot, we can reduce the chance that a licensed channel is not sensed by any of the users.  This approach achieves the {\em load balancing} effect since it attempts to spread out secondary users to the channels.

\subsection{Transmission Phase \label{subsec:transph}}

We adopt the $p$-persistent CSMA protocol for data channel access for secondary users during the data transmission phase. Under this protocol, a secondary user delays its transmission when the channels are busy. 
Once one or more channels are detected idle, the secondary user will attempt to access the idle channel(s) for data transmission with probability $p$.  We consider the heavy load domain, where each secondary user always has data to send to every other secondary user. The following two cases are investigated for opportunistic spectrum access for secondary users. 

\subsubsection{Case 1}

Once the estimate of channel $m$, i.e., $a_{m,k}$, exceeds threshold $\Theta_1$, each of the secondary users sensing channel $m$ will send an RTS packet on channel $m$ with probability $p$, to contend for the transmission opportunity on this channel. If there is only one secondary user that sends RTS, then it wins the channel; if there is no secondary user that sends RTS, then the channel will not be accessed and will be wasted; if there are more than one RTS packets sent on channel $m$, there is collision and none of the secondary users can use the channel. 

We define $P_m^{idle}$, $P_m^{succ}$ and $P_m^{coll}$ as the probability that there is no RTS transmission on channel $m$, the probability that exactly one secondary user successfully transmits an RTS on channel $m$, and the probability that there is collision on channel $m$ when multiple RTS packets are transmitted, respectively.  Recall that $U_m$ is the number of secondary users that choose channel $m$ to sense. This set of secondary users also attempt to access channel $m$ if it is found idle. With $p$-persistent CSMA, it follows that 
\begin{eqnarray}\label{eq:ProbAccess}
&& \hspace*{-0.4in} P_m^{idle}(U_m) = (1-p)^{U_m} \label{eq:pmidle} \\
&& \hspace*{-0.4in} P_m^{succ}(U_m) = U_m\times p\times(1-p)^{U_m-1} \label{eq:pmsucc} \\
&& \hspace*{-0.4in} P_m^{coll}(U_m) = 1 - P_m^{idle}(U_m) - P_m^{succ}(U_m) 
                    = 1-(1-p)^{U_m}-U_m\times p\times(1-p)^{U_m-1}_{.} \label{eq:pmcoll} 
\end{eqnarray}

\subsubsection{Case 2}

We assume that the CR users can transmit data over more than one channels using the channel bonding/aggregation techniques~\cite{Su08JNL, Corderio06}. In this case, every secondary user keeps on sensing the channel until the channel state is identified or until the end of the sensing phase. At the beginning of the transmission phase, the set of idle channels are identified and are know to all the secondary users.  Then every secondary user will transmit an RTS packet with probability $p$ on the control channel, to contend for the entire set of idle channels.  If there is only one secondary user that sends RTS on the control channel, it wins the entire set of idle channels. Otherwise, the idle channels will be wasted (i.e., when no RTS is sent, or more than one RTS are sent on the control channel).  

We define $P^{idle}$, $P^{succ}$ and $P^{coll}$ as the probability of no RTS transmission on the control channel, the probability that exactly one RTS sent on the control channel, and the probability of collision on the control channel, respectively.
For $p$-Persistent CSMA, we have 
\begin{eqnarray}\label{eq:ProbAccess2}
&& \hspace*{-0.4in} P^{idle}(N) = (1-p)^{N} \label{eq:pmidle2} \\
&& \hspace*{-0.4in} P^{succ}(N) = N\times(1-p)^{N-1} \label{eq:pmsucc2} \\
&& \hspace*{-0.4in} P^{coll}(N) = 1 - P^{idle}(N) - P^{succ}(N) 
                    = 1-(1-p)^{N}-N\times p \times (1-p)^{N-1}_{.} \label{eq:pmcoll2} 
\end{eqnarray}

\subsection{Interference Analysis \label{subsec:interference}}

One of the main challenges in designing a CR network MAC protocol is how to balance the tension between maximizing the capacity of secondary users and protecting primary users from harmful collisions. Let $\gamma_m \in [0,1]$ be the maximum tolerable collision probability to primary users on channel $m$: $\gamma_m=0$ means that no secondary transmission is allowed, while $\gamma_m=1$ means that secondary users have the same privilege as primary users when accessing the channels.
The probability of collision caused by secondary users to primary users should be kept below $\gamma_m$. 

We first derive the conditional probability that channel $m$ is miss detected to be idle by $u$ secondary users after mini-slot $k$, as follows.
\begin{eqnarray} \label{eq:CondIntf}
& & \hspace*{-0.2in} \Pr(K_m=k \; | \; U_m=u,S_m=1) \nonumber \\
& & \hspace*{-0.4in} = \sum_{D_m^k \in \Psi^1_{m,ku}}
\sum_{D_m^{k-1} \in \Psi^2_{m,(k-1)u}}\cdots
\sum_{d_m^1 \in \Psi^2_{m,u}}
\left(\!\!\!
\begin{array}{c}
u \\
d_m^1
\end{array}
\!\!\!\right)
\left(\!\!\!
\begin{array}{c}
u \\
d_m^2
\end{array}
\!\!\!\right)
\cdots
\left(\!\!\!
\begin{array}{c}
u \\
d_m^k
\end{array}
\!\!\!\right)(\delta_m)^{D_m^k}(1-\delta_m)^{ku-D_m^k}_{.}
\end{eqnarray}
In Case 1, the idle channels are accessed by different secondary users. The probability that secondary users collide with primary users on channel $m$ is
\begin{eqnarray}\label{eq:ProbIntf}
P_{m,1}^{intf}
=  \sum_{k=1}^{\bar{K}}\sum_{u=0}^N \Pr(K_m=k \; | \; U_m=u,S_m=1) \times 
           \Pr(U_m=u) \times \left[ P_m^{succ}(u)+P_m^{coll}(u) \right].  
\end{eqnarray}
In Case 2, a winning secondary user takes all the idle channels using the channel bonding/aggregation technique. The probability that secondary users collide with primary users on channel $m$ is
\begin{eqnarray}\label{eq:ProbIntf2}
P_{m,2}^{intf} =
  \sum_{k=1}^{\bar{K}}\sum_{u=0}^N \Pr(K_m=k \; | \; U_m=u,S_m=1) \times 
           \Pr(U_m=u) \times P^{succ}(N).  
\end{eqnarray}

For primary user protection, the probability of secndary users causing collision with primary users on channel $m$ should be kept lower than or equal to 
$\gamma_m$, i.e.,
\begin{eqnarray}
P_m^{intf} \le \gamma_m.
\end{eqnarray}
This constraint is used to set the channel access probability $p$ for the $p$-persistent CSMA protocol.

\subsection{Throughput Analysis \label{subsec:throughput}}

Based on previous analysis, the expected throughput of the proposed CR MAC protocols adopting the two sensing policies, can be derived after the system attains steady state.  Without loss of generality, we ignore the time spent on RTS/CTS exchanges, which can be approximated by a fixed amount of overhead. 

In Case 1, the expected throughput of channel $m$ that is sensed by $u$ users, denoted by $\Lambda^1_m(u)$, can be derived as
\begin{eqnarray}
\Lambda^1_m(u) 
= \sum_{k=1}^{\bar{K}}\Pr(K_m=k\; | \; U_m=u,S_m=0) \times 
    R_m \times \frac{1}{T_s} \times \left[ (\bar{K}-k) T_{ms} + T_{data} \right],
\end{eqnarray}
where $R_m$ is the data rate of channel $m$, and $T_s$ is the time slot duration given in (\ref{eq:ts}).  

Let $\vec{U}=[U_1, U_2, \cdots, U_M]$ denote the secondary user {\em sensing state vector}, where each element $U_m$ represents the number of secondary users that choose channel $m$ to sense and access.  The aggregate throughput for the CR network, denoted by $\Omega_1$, is
\begin{eqnarray}
\Omega_1 = \sum_{\vec{U}} \Pr(\vec{U}) \sum_{\vec{S}} \Pr(\vec{S}) \sum_{m=1}^M \left( I_{[S_m=0]}\Lambda^1_m(u) \times P_m^{succ}(u) \right),
\end{eqnarray}
where $\vec{S}$ is the channel state vector defined in Section~\ref{sec:model}, $P_m^{succ}(u)$ is given in (\ref{eq:pmsucc}) and $I_{[S_m=0]}$ is an indicator that channel $m$ is idle, i.e.,
\begin{equation}
  I_{[S_m=0]} = \left\{ \begin{array}{ll} 
                   1, & \mbox{if } S_m=0 \\
                   0, & \mbox{otherwise}.
                        \end{array} \right.
\end{equation} 

In Case 1, the sensing process on channel $m$ can stop early if the estimate of channel availability $a_{m,k}$ exceeds threshold $\Theta_1$ or drops below the threshold $\Theta_0$. In the former case, the remaining mini-slots can be used to transmit data. In Case 2, all CR users wait till the beginning of the transmission phase, and then contend for the idle channels by sending RTS packets on the control channel. 
The winning secondary user's data transmissions start at the beginning of the transmission phase (i.e., after $\bar{K}$ mini-slots). 
We can derive the 
throughput for channel $m$ as follows. 
\begin{eqnarray}
\Lambda^2_m(u) = 
\sum_{k=1}^{\bar{K}}\Pr(K_m=k\; | \; U_m=u,S_m=0) \times 
R_m \times \frac{T_{data}}{T_s},
\end{eqnarray}
The aggregate throughput for the CR network, denoted by $\Omega_2$, is
\begin{eqnarray}
\Omega_2=\sum_{\vec{U}} \Pr(\vec{U}) \sum_{\vec{S}} \Pr(\vec{S}) \sum_{m=1}^M \left( I_{[S_m=0]}\Lambda^2_m(u) P^{succ}(N) \right).
\end{eqnarray}

\section{Simulation Results \label{sec:sim}}

\subsection{Simulation Settings \label{subsec:simset}}

We evaluate the performance of the proposed CR MAC protocol using a customized simulator developed with MATLAB.  We compare the following four schemes in the simulations: 
\begin{itemize}
	\item A simple random sensing scheme that each user chooses one channel to sense with equal probability, termed {\em Random} in the plots. 
	\item The negotiate sensing scheme presented in~\cite{Su08JNL}, termed {\em Negotiate} in the plots. 
	\item The memoryless sensing scheme as described in Section~\ref{subsubsec:memoryless}. In the figures, {\em Memoryless}1 refers to transmission scheme Case 1 (i.e., idle channels are accessed by different secondary users, see Section~\ref{subsec:transph}), and {\em Memoryless}2 refers to transmission scheme Case 2 (i.e., idle channels are accessed by a winning secondary user using channel bonding/aggregation techniques~\cite{Corderio06}).  
	\item The improved sensing scheme presented in Section~\ref{subsubsec:improved}. In the figures, {\em Improved}1 refers to transmission scheme Case 1, and {\em Improved}2 refers to transmission scheme Case 2.
\end{itemize}
We choose the negotiate sensing scheme since it adopts a similar network model and assumptions.  With this scheme, different secondary users attempt to select distinct channels to sense by overhearing the control packets on the control channel~\cite{Su08JNL}. One of the major differences between negotiate sensing and the proposed schemes in this paper, is that negotiate sensing does not consider spectrum sensing errors in the MAC protocol design. 

The simulation parameters are summarized in Table~\ref{tb:Parameter}, which follow the typical values used in~\cite{Su08JNL}. We run each simulation scenario for 10 times with different random seeds. Each point in the plots shown in this section is the average of $10$ simulation runs. We plot $95\%$ confidence intervals as error bars on the simulation curves, which are negligible in all the figures.  

\begin{table}
\begin{center}
\caption{Simulation Parameters}
\begin{tabular}{l l l}
\hline 
{\em Symbol} & {\em Value} & {\em Definition} \\
\hline 
$T_{ms}$ & 9 $\mu$s & mini-slot interval \\
$T_s$ & 1.89 ms & time slot interval \\
$M$ & 5 & number of licensed channels \\
$N$ & 8 & number of secondary users \\
$\eta$ & 0.3 & utilization of the licensed channels \\
$\epsilon$ & 0.3 & probability of false alarm \\
$\delta$ & 0.3 & probability of miss detection \\
$R$ & 1 Mb/s & data rate of each licensed channel \\
$\Theta_1$ & 0.8 & upper threshold for channel decision \\
$\Theta_0$ & 0.2 & lower threshold for channel decision \\
$\bar{K}$ & 5 & maximum stop time for channel sensing \\
\hline 
\end{tabular}
\label{tb:Parameter}
\end{center}
\end{table}

\subsection{Simulation Results \label{subsec:simre}}

We first verify our throughput analysis presented in Section~\ref{sec:crmac}. In 
Figs~\ref{fig:FalseAlarmRes} and~\ref{fig:MissDetecRes}, we plot the throughputs for the CR MACs incorporating the memoryless sensing policy and the improved sensing policy, with both simulation and analysis curves (dashed curves). We observe that the simulation and analysis curves for the memoryless sensing CR MACs overlap completely with each other, indicating that our analysis is exact.
Furthermore, although there is a gap between the simulation and analysis curves for the CR MACs with the improved sensing policy, the gap is generally very small.  The gap is actually due to an approximation we used for the secondary user sensing state vector $\vec{U}$, for which deriving the exact form is non-trivial. In the analysis, we assume that the probability is 0 that a channel is not sensed by any secondary user.  We find the analysis can serve as a tight upper bound for the CR MAC throughput performance when the improved sensing policy is incorporated.  

\begin{figure}
\centering  
\includegraphics[width=3.3in, height=2.1in]{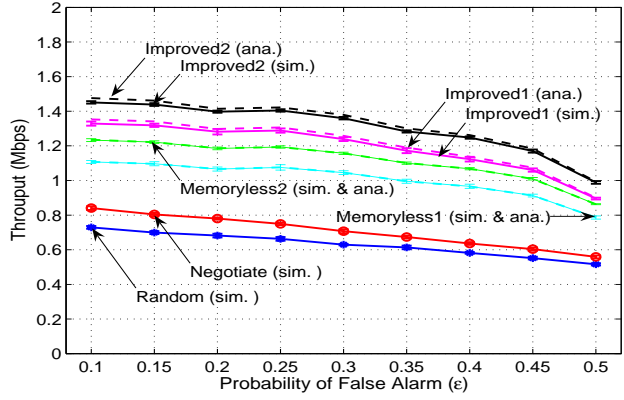}
\caption{Throughput versus false alarm probability (with $95\%$ confidence intervals for the simulation results).}
\label{fig:FalseAlarmRes}
\end{figure}

\begin{figure}
\centering  
\includegraphics[width=3.3in, height=2.1in]{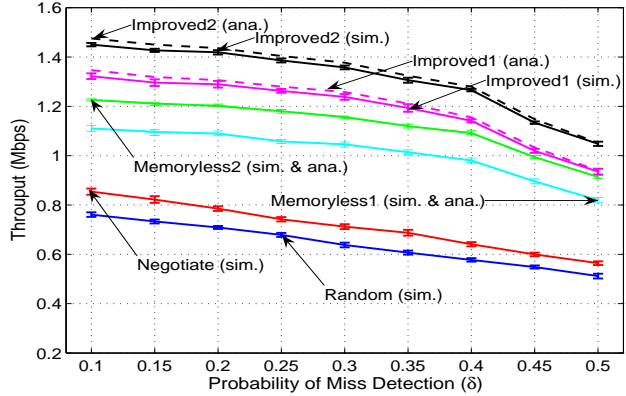}
\caption{Throughput versus miss detection probability (with $95\%$ confidence intervals for the simulation results).}
\label{fig:MissDetecRes}
\end{figure}

We next investigate the impact of sensing errors on the CR MAC performance. We assume identical false alarm probabilities $\epsilon_m = \epsilon$, and identical miss detection probabilities $\delta_m = \delta$ for all the licensed channels. In Fig.~\ref{fig:FalseAlarmRes}, we plot the throughputs obtained by the four schemes versus the false alarm probability $\epsilon$.  Specifically, we fix $\delta$ at 0.3 and increase $\epsilon$ from $0.1$ to $0.5$.  Intuitively, a higher false alarm probability results in lower probability for secondary users to exploit the transmission opportunities in the licensed channels. This is illustrated in the figure, as all the four throughput curves decrease as $\epsilon$ is increased. The improved sensing MAC achieves the best performance, with about 10\% gain over the memoryless sensing MAC and about 200\% gain over the two existing approaches. The advantage of channel bonding/aggregation is also demonstrated in the figure, where Case 2 transmission scheme always achieves higher throughput than Case 1 scheme. 

In Fig.~\ref{fig:MissDetecRes}, we examine the impact of miss detection probability  $\delta$ on the CR network throughput. In these simulations, we fix $\epsilon$ at 0.3 and increase $\delta$ from $0.1$ to $0.5$.  We find that the miss detection error has small impact on the throughputs of the random sensing and negotiate sensing protocols, since miss detection errors are not considered in the design of these protocols.  However, both our proposed CR MAC schemes achieve considerable throughput gains over the random sensing and negotiate sensing schemes.

In Fig.~\ref{fig:UtilityRes}, we plot the throughput of the four schemes under different channel utilization values ranging from $0.3$ to $0.7$.  As utilization of the licensed channels is increased, the transmission opportunities for secondary users are clearly reduced. Therefore the four curves are all decreasing function of $\eta$.  The improved policy with transmission scheme Case 2 achieves the best performance among the four schemes, while random sensing has the poorest performance.  When the channel utilization is $\eta=0.3$, the improved policy achieves a $10\%$ gain in throughput over the memoryless sensing policy. We also plot the upper bound on the CR network throughput, as given by the channel idle probability in (\ref{eq:ProbIdle}). When the channel utilization is low, the improved policy with transmission scheme Case 2 can achieve a throughput very close to the upper bound. The gap between the upper bound and the achievable throughput increase when the primary users get more busy. 

\begin{figure}
\centering  
\includegraphics[width=3.3in, height=2.1in]{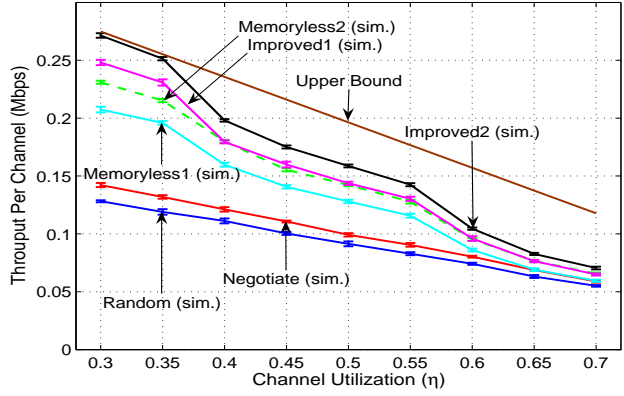}
\caption{Throughput versus channel utilization (with $95\%$ confidence intervals for the simulation results).}
\label{fig:UtilityRes}
\end{figure}

In Fig.~\ref{fig:collision}, we plot the collision probability caused by secondary transmissions to primary users, when the maximum allowable collision probability is set as $\gamma=3.5\%$. We plot the measured collision probabilities in the simulations when the channel utilization is increased from 30\% to 70\%. It can be seen that the collision probabilities of random and negotiate sensing schemes increases along with $\eta$ and soon exceed the 3.5\% threshold. On the other hand, the collision probabilities of the proposed schemes are kept around 2.5\% for the entire range of $\eta$ examined. 

\begin{figure}
\centering
\includegraphics[width=3.3in, height=2.1in]{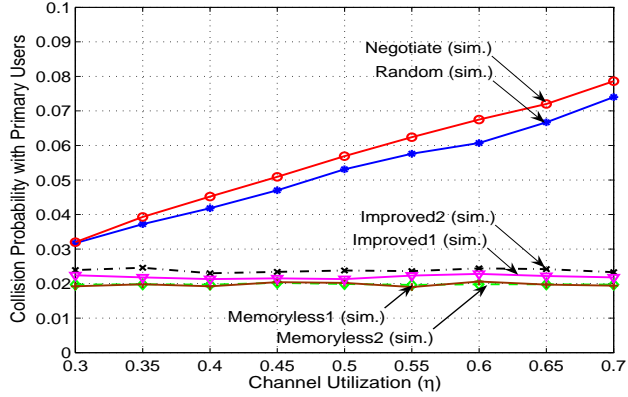}
\caption{Collision probability with primary users when the maximum tolerable collision probability is $\gamma = 3.5\%$.}
\label{fig:collision}
\end{figure}

Finally, we plot the throughput of the primary users in Fig.~\ref{fig:primarythput}. The primary user throughput curves for all the four schemes increase when the channel utilization $\eta$ is increased. The gap between the curves of the proposed schemes and those of random and negotiate sensing schemes, is due to the different collision rates secondary users introduce to primary users under these schemes (see Fig.~\ref{fig:collision}). As $\eta$ is increased, the proposed schemes introduces relatively constant collision rates to primary users (i.e., around 2.5\%), while the random and negotiate sensing schemes introduce increasingly higher collision rates to primary users, which degrade the throughput of primary users.  


\begin{figure}
\centering  
\includegraphics[width=3.3in, height=2.1in]{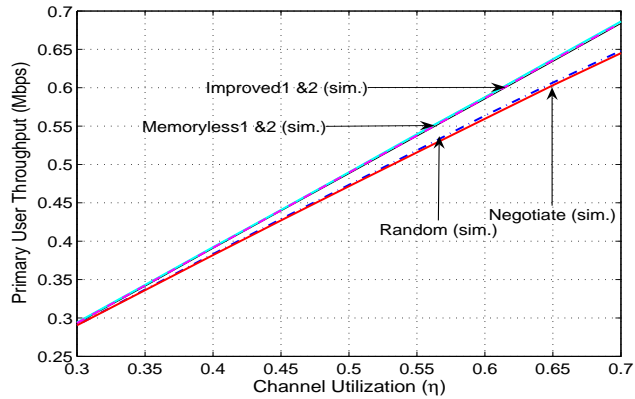}
\caption{Total throughput of primary users when they become more active.}
\label{fig:primarythput}
\end{figure}

\section{Related Work \label{sec:related}}


CR has been considered as a ``spectrum agile radio'' that enables dynamic spectrum access to exploit transmission opportunities in licensed spectrum bands~\cite{Zhao07c, Zhao09}.  Several CR MAC protocols have been proposed in the literature~\cite{Su08JNL, Su08WM, Su07CISS, Le08, Zhao07JNL, Jia08JNL, Hamdaoui08, Hsu07, YChen08}. In~\cite{Le08}, Le and Hossain propose a MAC protocol for opportunistic spectrum access in CR networks. Two channel selection schemes are proposed: uniform channel selection and spectrum opportunity-based channel selection. The latter considers the probability of spectrum availability and selects each channel with different probabilities based on the estimation of spectrum availability.  

A decentralized cognitive MAC protocol is developed in~\cite{Zhao07JNL} that allows secondary users to explore spectrum opportunities without a central coordinator or a dedicated control channel. However, the implementation is complicated and hardware demanding. This is because each secondary user needs to be equipped with multiple sensors to detect the availability of each licensed channel.  

In a recent work~\cite{Su08JNL}, Su and Zhang propose a negotiation-based sensing policy (NSP), in which a secondary user knows which channels are already sensed and will choose a different channel to sense. In~\cite{Jia08JNL}, the authors consider two types of hardware constraints: sensing constraint and transmission constraint. In~\cite{Hamdaoui08}, based on the information obtained by a delegate secondary user, each secondary user group selects and switches to the best data channel for data communication during the next period. In~\cite{Hsu07}, the authors describe a policy such that a secondary user selects the channel that has the highest successful transmission probability to access.  
Many prior works~\cite{Le08, Su08JNL, Jia08JNL, Motamedi07} assume perfect channel sensing, within which secondary users can always sense the channel correctly. Sensing errors are not considered. 

The joint design of opportunistic spectrum access and sensing policies is studied in a recent work~\cite{YChen08} in the presence of sensing errors. The authors develop a separation principle that decouples the designs of sensing and access policy. This interesting study is based on a constrained partially observable Markov decision process (POMDP) formulation and thus has an exponentially growing computational complexity~\cite{YChen08}.



\section{Conclusion \label{sec:con}}  

We 
studied the problem of design and analysis of MAC protocol for CR networks in this paper. In particular, we proposed and analyzed two opportunistic multi-channel MAC protocols, adopting a memoryless sensing policy and an improved sensing policy, respectively.  The impact of imperfect sensing (in the forms of miss detection and false alarm) are explicitly considered in the CR MAC design.  We developed analytical models to evaluate the performance of the proposed protocols.  Our simulation study demonstrates the accuracy of the analysis, as well as the superior throughput performance of the proposed CR MACs over existing approaches.

\section*{Acknowledgment}

This work is supported in part by the US National Science Foundation (NSF) under Grants CNS-0953513, ECCS-0802113, and IIP-1032002, 
and through the NSF Wireless Internet Center for Advanced Technology (WICAT) at Auburn University. Any opinions, findings, and conclusions or recommendations expressed in this material are those of the author(s) and do not necessarily reflect the views of the Foundation.

\end{document}